\documentclass[12pt]{article}
\usepackage{fullpage}
\usepackage[authoryear]{natbib}
\usepackage{amsmath,amssymb,amsfonts}
\usepackage{bbm,mathrsfs,multirow,bm}
\usepackage{graphicx}
\usepackage{enumitem}
\usepackage{float}
\usepackage{caption,subcaption}
\usepackage[ruled]{algorithm2e}
\usepackage{array,booktabs}
\usepackage{fontenc}
\usepackage[usenames,dvipsnames,svgnames,table]{xcolor}
\usepackage[colorlinks=true, linkcolor=red, urlcolor=blue, citecolor=blue]{hyperref}
\usepackage{setspace}

\newcommand{\bg}{\begin{eqnarray}}
\newcommand{\ed}{\end{eqnarray}}
\newcommand{\bgn}{\begin{eqnarray*}}
\newcommand{\egn}{\end{eqnarray*}}

\newcommand{\trans}{^{\scriptscriptstyle \mathsf T}}
\newcommand{\Beta}{\bm{\beta}}
\newcommand{\btheta}{\bm{\theta}}

\newcommand{\cov}{\operatorname{Cov}}

\newtheorem{theorem}{Theorem}[section]

\begin{document}

\begin{center}
{\Large \bf Semi-supervised Method for Risk Prediction with Doubly Censored EHR Data} \vspace{.2in}
\end{center}

\begin{center}\vspace{-2mm}
{Jie Zhou, Enhao Wang\\
School of Mathematics, Capital Normal University, Beijing 100048, China\bigskip \\
 Xuan Wang\footnote{Correspondence: xuan.wang@utah.edu}\\
 Division of Biostatistics, Department of Population Health Sciences, University of Utah, SLC, UT 84108, U.S.A.} \vspace{-2mm}
\end{center}

\begin{abstract}
The rapid expansion of large-scale electronic health record (EHR) data offers unique opportunities to improve the accuracy and efficiency of clinical risk estimation. Yet, because clinical events may occur outside the recording health system, clinical event outcomes are frequently subject to double censoring (both left and right). Besides, gold-standard event times can often only be ascertained through labor-intensive manual chart reviews, yielding labels for only a small subset of patients. Reliance on this limited labeled set alone is limited in efficiency, whereas widely available surrogate outcomes such as the time to first diagnostic code or first disease mention are error-prone and can yield biased estimates if used directly. Semi-supervised learning (SSL) methods provide a principled way to integrate labeled and unlabeled data, and prior work has demonstrated their advantages in settings with binary or right-censored outcomes. However, existing approaches do not accommodate double censoring for risk prediction, which poses additional methodological challenges. To address this gap, we develop a novel SSL framework for risk prediction that combines a small set of gold-standard labels with large-scale surrogate information under double censoring. We establish the theoretical validity of the proposed estimator. Through extensive simulation studies, we show that our method substantially improves estimation efficiency relative to existing supervised estimators (based on the labeled data). Finally, we demonstrate its practical value by applying it to study risk factors for type 2 diabetes (T2D) using EHR data from a health system in the US.\\

\noindent {\it{Keywords}}: Double censoring; Electronic health record; Risk effect; Semi-supervised estimation; Augmented estimation
\end{abstract}

\newpage
\section{Introduction}
Electronic Health Records (EHRs) have emerged as a uniquely valuable resource in translational research. They furnish a wealth of clinical details through comprehensive patient profiles, encompassing a broad and heterogeneous patient population that is often underrepresented in Randomized Clinical Trials (RCTs), and enable access to large sample sizes with extensive longitudinal follow-up. These attributes render EHRs particularly potent in examining real-world care patterns and outcomes, thereby complementing the evidence generated by RCTs and prospective cohort studies. Simultaneously, employing EHR data to estimate treatment or risk effects poses significant methodological challenges.

The event time outcomes in EHR data are often affected by double censoring: the true onset time of a disease may precede the first recorded visit (left censoring), while the event of interest may remain unobserved at the end of follow-up (right censoring). This situation arises in our motivating example of using EHR data from a health system in the US for predictive modeling of type 2 diabetes mellitus (T2D) onset risk. In this cohort, some patients had T2D at their first visit (left censoring), others developed T2D during follow-up, and some remained diabetes-free at their last visit (right censoring). Ignoring double censoring can lead to biased estimates of survival functions, hazard ratios, and treatment effects, and poses challenges to identifiability under real-world EHR data structures. A substantial statistical literature exists for analyzing double censored data. Estimates of survival functions based on double censored data have been studied by many researchers \citep{turnbull1974nonparametric, tsai1985large, chang1990weak, gu1993asymptotic}. \citet{zhang1996linear} and \citet{ren1997regression} studied linear regression models with double censored data, \citet{kim2010asymptotic} and \citet{kim2013algorithm} investigated proportional hazards models, while \citet{cai2004semiparametric} and \citet{li2018class} considered semiparametric transformation models and adopted methods based on unbiased estimating equations, respectively, using both maximum likelihood and method-of-moments estimators.

Another key challenge lies in measuring the primary outcome of interest, such as the doubly censored time to onset of T2D in our example. Accurate determination often requires substantial manpower and time for medical record review \citep{zhang2019high}, which may not be feasible on a large scale and is typically only applicable to a limited subset of patients, known as the labeled set. The remaining patients are referred to as the unlabeled set. In contrast, surrogate outcomes, such as the time of the first International Classification of Diseases (ICD) code for T2D or the first mention of T2D in clinical records, are easily accessible across the entire cohort but are subject to substantial misclassification. Directly relying on such error-prone surrogate outcomes can lead to biased survival estimates, while using only a small labeled set can result in high variability. These limitations have spurred the development of semi-supervised methods, which utilize limited gold standard labels alongside large-scale surrogate outcome-based data to achieve more efficient and reliable inference. In the context of EHRs, semi-supervised learning (SSL) methods have been extensively studied, particularly in classification-based phenotypic analysis using machine learning approaches \citep{beaulieu2016semi, zhang2019high}. Beyond classification, \citet{chakrabortty2018efficient} and \citet{zhang2019semi} have developed SSL inference procedures for continuous outcome regression, while \citet{gronsbell2018semi} proposed an SSL method for evaluating the predictive performance of binary outcomes. Recently, \citet{ahuja2023semisupervised} introduced SSL estimation techniques for survival analysis with current status data. For right-censored data, enhanced estimation methods have been proposed to improve supervised estimators based solely on labeled data \citep{chen2002cox, jiang2007additive, yu2013adjusted, wang2015semiparametric, tong2020augmented}. However, to our knowledge, there is currently no SSL method for risk prediction with doubly censored data.

The rest of this paper is organized as follows. Section 2 introduces the proposed semi-supervised estimator, describing the steps based on supervised likelihood and the enhancement steps utilizing alternative results. Section 3 establishes the asymptotic properties of the estimator, including consistency, asymptotic normality, and efficiency improvement relative to supervised methods. Section 4 reports the results of extensive simulation studies, demonstrating the finite sample performance of the method. Section 5 illustrates the practical value of the method by applying the proposed approach to the double-censored EHR data of the health system regarding T2D, emphasizing its improved efficiency and robustness compared to standard supervised methods. Section 6 summarizes the significance, potential extensions, and limitations of the method. Technical proofs of asymptotic results and additional simulation findings are provided in the Appendix.

\section{Model and Estimation}

\subsection{Notation}

For individual $i$, let $T_i$ denote the event time of interest subject to double censoring. Let $L_i$ and $U_i$ denote the left and right censoring times, respectively, that are always observed and satisfy $P(L_i < U_i)=1$. Under double censoring, $T_i$ is observed only if it falls within the interval $[L_i, U_i]$, while $T_i$ is left-censored if $T_i < L_i$ and right-censored if $T_i > U_i$. Thus, the observed data for $T_i$ can be represented by $\{X_i=\max(L_i,\min(T_i,U_i)),\delta_i\}$, where $\delta_i=1$ if $X_i= T_i$; $\delta_i= 2$ if $X_i=U_i$; $\delta_i= 3$ if $X_i=L_i$. Suppose there is a vector of $q$ surrogate outcomes $\mathbf{T}_{i}^{\ast}=(T_{1i}^{\ast},\ldots,T_{qi}^{\ast})^{\trans}$, which are proxies of $T_i$ and also subject to double censoring. Similarly, for $\mathbf{T}_{i}^{\ast}$, we only observe $\mathbf{X}_{i}^{\ast}=(X_{1i}^{\ast},\ldots,X_{qi}^{\ast})^{\trans}$ and $\bm{\delta}_{i}^{\ast}=(\delta_{1i}^{\ast},\ldots,\delta_{qi}^{\ast})^{\trans}$, where $X_{ki}^{\ast}=\max(L_i,\min(T_{ki}^{\ast},U_i))$ and $\delta^*_{ki}=1$ if $X^*_{ki}= T^*_{ki}$; $\delta^*_{ki}= 2$ if $X^*_{ki}=U_{i}$; $\delta^*_{ki}= 3$ if $X^*_{ki}=L_{i}$ for $k = 1, \ldots, q$. There is a $p$-dimensional vector of baseline covariates $\mathbf{Z}_{i}$. Moreover, we assume that $(L_i, U_i)$ are independent of $(T_i,\mathbf{T}_{i}^{\ast}, \mathbf{Z}_{i})$.

Suppose that there are $N$ individuals in the EHR cohort. The true outcome $\{X_i, \delta_i\}$ is observed only for a small set of labeled data of size $n$. Without loss of generality, we assume the labeled data consists of the first $n$ individuals in the cohort. Thus, for labeled data, we observe
\[
\mathcal{D}_{l}=\left\{\big(X_i,\delta_i,L_i,U_i,\mathbf{X}_{i}^{\ast\trans},\bm{\delta}_{i}^{\ast\trans}, \mathbf{Z}_{i}^{\trans}\big)^{\trans},\,\,i=1,\ldots,n\right\},
\]
and for unlabeled data, we observe
\[
\mathcal{D}_{u}=\left\{\big(L_i,U_i,\mathbf{X}_{i}^{\ast\trans},\bm{\delta}_{i}^{\ast\trans},  \mathbf{Z}_{i}^{\trans}\big)^{\trans},\,\,i=n+1,\ldots, n+N\right\}.
\]
Furthermore, we assume that $\rho=n/(n+N) \to 0$ as $n\to \infty$, which means the missing rate of the true outcome $\{X_i, \delta_i\}$ is close to 1. The standard methods for missing data can not be applied here.

We consider the semiparametric linear transformation model due to its generality. The model assumes
\begin{equation}\label{model}
h(T)=-\Beta'\mathbf{Z}+\varepsilon,
\end{equation}
where $h$ is an unknown increasing function, $\varepsilon$ is a random variable with a known distribution which is independent of $\mathbf{Z}$, and $\Beta$ is an unknown $p$-dimensional covariate coefficient of interest. Model (\ref{model}) includes the proportional hazards model and the proportional odds model as special cases with $\varepsilon$ following the extreme value distribution and the standard logistic distribution, respectively \citep{chen2002semiparametric}.

Our goal is to develop a semi-supervised estimator of the risk effect $\Beta$ using both the labeled data $\mathcal{D}_{l}$ and the unlabeled data $\mathcal{D}_{u}$ to improve the supervised estimator that uses only $\mathcal{D}_{l}$. There are two main steps. First we obtain a supervised estimator of $\Beta$ based on $\mathcal{D}_{l}$. In the second step we augment the supervised estimator using information from $\mathcal{D}_{u}$ with the surrogate event time under some working model. The details are as follows.

\subsection{Supervised estimation based on labeled data}\label{sl}

Based on the labeled data $\mathcal{D}_l$, we adopt the nonparametric maximum likelihood approach of \citet{li2018class} for doubly censored data to estimate the risk effect $\Beta$. We first notice that  model (\ref{model}) corresponds to the semiparametric transformation models considered in \citet{li2018class} by taking $F_\epsilon(x)=1-\exp\{-G(\exp(x))\}$ and $h(t)=\log(\Lambda(t))$, where $G$ is the known increasing function, $\Lambda(t)$ denotes the unknown increasing baseline cumulative hazard function in the semiparametric transformation models. Here is an overview of the nonparametric maximum likelihood estimation through expectation–maximization (EM) algorithm with the use of subject-specific independent Poisson variables \citep{li2018class}. Suppose there exists a density $\phi(\mu\mid r)$ such that the transformation function $G$ satisfies
$
\exp\{-G(x)\} = \int_0^\infty e^{-\mu x}\phi(\mu\mid r)\,\mathrm{d}\mu.
$
To facilitate the estimation, the baseline cumulative hazard function $\Lambda(t)$ is approximated by a step function with nonnegative jumps $\lambda_k$ only at the distinct uncensored event times $t_1<\dots<t_{K_n}$.
 Introducing latent frailty variables $\mu_i\sim\phi(\mu\mid r)$ and subject-specific independent Poisson variables $N_{ik}$ with mean $\lambda_k e^{\mathbf{Z}_i^\top\Beta}\mu_i$, the estimation proceeds via an EM algorithm.

 In the E-step, the complete-data log-likelihood leads to the following $Q$-function :
\[
Q(\bm{\zeta},\bm{\zeta}^{(m)}) = \sum_{i=1}^{n}\sum_{k=1}^{K_n}\Bigl\{
\mathbf{Z}_i^{\top}\bm{\beta}\; \mathrm{E}_{\bm{\zeta}^{(m)}}(N_{ik})
+ \log\lambda_k\; \mathrm{E}_{\bm{\zeta}^{(m)}}(N_{ik})
- \lambda_k e^{\mathbf{Z}_i^{\top}\bm{\beta}} \mathrm{E}_{\bm{\zeta}^{(m)}}(\mu_i) \Bigr\},
\]
where $\boldsymbol{\zeta}=(\Beta^\top,\lambda_1,\dots,\lambda_{K_n})^\top$ and the expectations are taken with respect to the latent variables conditional on the observed data and current parameter estimate $\boldsymbol{\zeta}^{(m)}$.
The conditional expectations are computed as
\[
\mathrm{E}_{\bm{\zeta}}(N_{ik}) = I(\delta_i=3)\frac{\lambda_k e^{\mathbf{Z}_i^{\top}\bm{\beta}}}{1-e^{-G(V_i)}}I(t_k\le X_i)
+ I(\delta_i=1)I(t_k=X_i)
+ \lambda_k e^{\mathbf{Z}_i^{\top}\bm{\beta}} \mathrm{E}_{\bm{\zeta}}(\mu_i)I(t_k>X_i),
\]
\[
\mathrm{E}_{\bm{\zeta}}(\mu_i) = I(\delta_i=3)\frac{1-e^{-G(V_i)}G'(V_i)}{1-e^{-G(V_i)}}
+ I(\delta_i=1)\frac{\int \mu_i^2 e^{-\mu_i V_i}\phi(\mu_i|r)\,\mathrm{d}\mu_i}{e^{-G(V_i)}G'(V_i)}
+ I(\delta_i=2)G'(V_i),
\]
where $V_i = \sum_{t_k\le X_i}\lambda_k e^{\mathbf{Z}_i^\top\Beta}$. Note that the indicator $I(\delta_i=3)$ corresponds to left censoring, $I(\delta_i=1)$ corresponds to exact observation, and $I(\delta_i=2)$ corresponds to right censoring.

In the M-step, the jump sizes are updated via the closed form
\[
\lambda_k = \frac{\sum_{i=1}^n \mathrm{E}(N_{ik})}{\sum_{i=1}^n \mathrm{E}(\mu_i)e^{\mathbf{Z}_i^\top\Beta}},\qquad k=1,\dots,K_n,
\]
and $\Beta$ is updated by solving the profile score equation
\[
\sum_{i=1}^n\sum_{k=1}^{K_n} \mathrm{E}(N_{ik})\left\{\mathbf{Z}_i - \frac{\sum_{j=1}^n \mathrm{E}(\mu_j)e^{\mathbf{Z}_j^\top\Beta}\mathbf{Z}_j}{\sum_{j=1}^n \mathrm{E}(\mu_j)e^{\mathbf{Z}_j^\top\Beta}}\right\} = 0.
\]

The resulting estimator is denoted as $\hat{\Beta}_{SL}$. Denoting its influence function as ${\bm\xi}_i$, we have that
\[
\sqrt n(\hat\Beta_{SL}-\Beta_0)=\frac{1}{\sqrt n}\sum_{i=1}^n{\bm\xi}_i+o_p(1).
\]
The explicit form of ${\bm\xi}_i$ is derived in Appendix A.

\subsection{Semi-supervised estimation using both labeled and unlabeled data}\label{ssl}
To utilize the large unlabeled data with surrogate outcomes, we assume a working model (specified later) for the surrogate survival time $T^*$ with parameters $\btheta=({\bm\gamma}^{\trans},\alpha^{\trans})^{\trans}$, where ${\bm\gamma}$ is an unknown $p$-dimensional regression parameter analogous to $\bm\beta$, and $\alpha$ is a possibly infinite-dimensional nuisance parameter. A loss function $l_\mathcal{D}(\theta)$ based on dataset $\mathcal{D}$ is constructed under the working model. We estimate ${\bm\gamma}$ by minimizing $l_{\mathcal{D}_l}(\theta)$, denoted as $\hat{{\bm\gamma}}$, and also by minimizing $l_{{\mathcal D}_l\cup {\mathcal D}_u}(\theta)$, denoted as $\bar{{\bm\gamma}}$. If we can establish the multivariate asymptotic normality:
\begin{equation}\label{multi-normal}
\sqrt n\begin{pmatrix}
\hat{\Beta}_{SL}-\Beta_0\\
\hat{{\bm\gamma}}-\bar{{\bm\gamma}}
\end{pmatrix}\Rightarrow N\left(0,\begin{pmatrix}
\Sigma & \Omega\\
\Omega^{\trans} & \Sigma_{{\bm\gamma}}
\end{pmatrix}\right),
\end{equation}
the conditional distribution of 
\(
n^{1/2}(\hat{\bm\beta}_{SL}-\bm\beta_0) - n^{1/2}\Omega \Sigma_{\gamma}(\hat{\bm\gamma}-\bar{\bm\gamma})
\)
given \((\hat{\bm\gamma}-\bar{\bm\gamma})\) is asymptotically normal with mean zero. Let
\(
n^{1/2}(\hat{\bm\beta}_{SL}-\bm\beta_0)
= n^{1/2}\Omega \Sigma_{\gamma}(\hat{\bm\gamma}-\bar{\bm\gamma}),
\)
we obtain an updated estimator of \(\bm\beta_0\) as follows,
\begin{equation}\label{aug}
\hat{\Beta}_{SSL}=\hat{\Beta}_{SL}-\hat{\Omega}\hat{\Sigma}_{{\bm\gamma}}^{-1}(\hat{{\bm\gamma}}-\bar{{\bm\gamma}}),
\end{equation} 
where $\hat{\Omega}$ and $\hat{\Sigma}_{{\bm\gamma}}$ are consistent estimates of ${\Omega}$ and ${\Sigma}_{{\bm\gamma}}$ based on the labeled data. By properties of the multivariate normal distribution, $\sqrt n(\hat{\Beta}_{SSL}-\Beta_0)$ converges in distribution to a normal random variable with mean $0$ and covariance $\Sigma-{\Omega}{\Sigma}_{{\bm\gamma}}^{-1}\Omega^{\trans}$. The semi-supervised estimator $\hat{\Beta}_{SSL}$ is always consistent and more efficient than the original SL estimator $\hat{\Beta}_{SL}$. The estimator $\hat{\Beta}_{SSL}$ can also be understood as the most efficient estimator in the class of augmented estimators, $\hat{\Beta}_{AUG}=\hat{\Beta}_{SL}+\bm\lambda (\bar{{\bm\gamma}}-\hat{{\bm\gamma}})$, as $\bm\lambda=\hat{\Omega}\hat{\Sigma}_{{\bm\gamma}}^{-1}$ minimizes the variance of $\hat{\Beta}_{AUG}$.

The choice of the working model may affect the efficiency gain of $\hat{\Beta}_{SSL}$ over $\hat{\Beta}_{SL}$. Note that $T^*$ has the same doubly censored structure as $T$, a natural working model for the surrogate survival time $T^*$ is to take the same semiparametric linear transformation model as model (\ref{model}) for $T$. That is,
\begin{equation}\label{model2}
h^*(T^*)=-{\bm\gamma}'\mathbf{Z}+\varepsilon^*,
\end{equation}
where $h^*(\cdot)$ is an unspecified smooth strictly increasing function, and $\varepsilon^*$ has the same distribution as $\varepsilon$. Hence, ${\bm\gamma}$ can be estimated using the same EM procedure as $\hat{\Beta}_{SL}$. Let $\hat{\bm\gamma}_1$ and $\bar{\bm\gamma}_1$ be the resulting estimators for ${\bm\gamma}$ based the data from ${\mathcal D}_l$ and ${\mathcal D}_l\cup {\mathcal D}_u $, respectively, and let $\hat{\Beta}_{SSL1}$ denote the corresponding augmented estimator.

Another candidate working model for the surrogate survival time $T^*$ is the more flexible transformation model
\begin{equation}\label{model3}
h^*(T^*)=-{\bm\gamma}' \mathbf{Z}+\epsilon^* ,
\end{equation}
where $h^*(\cdot)$ is an unspecified smooth strictly increasing function, ${\bm\gamma}$ is the unknown regression parameter, and $\epsilon^*$ is a random variable with unknown distribution, which may be correlated with $\varepsilon$ in model (\ref{model}). Under this model that both $h^*(\cdot)$ and the distribution of $\epsilon^*$ are unknown, only the direction of ${\bm\gamma}$ is identifiable. However, the direction of ${\bm\gamma}$ can be recovered by maximizing a simple convex loss \citep{li1989regression} under the linearity condition for the distribution of $T^*$. Noting the doubly censored structure of $T^*$, we consider a logistic model for the left censoring probability and a Cox model for the remaining right-censored part. To further simplify the likelihood, we make another working assumption that $h^*(L) = \alpha_0 + \alpha_1 H(L)$, where $H(t)$ is some pre-specified monotone transformation function such as $\log(t)$. This leads to the following convex likelihood:
\begin{align}\label{ln}
L_n(\btheta)=&\prod_{i=1}^n\left\{\frac{\exp\{\btheta\trans\mathbf{V}_{i}\}}{1+\exp\{\btheta\trans\mathbf{V}_{i}\}}\right\}^{I(\delta^*_i=3)}
\left\{\frac{1}{1+\exp\{\btheta\trans\mathbf{V}_{i}\}}\right\}^{I(\delta^*_i\neq3)}\nonumber\\
&\times\left\{\frac{\exp\{{\bm\gamma}^{\trans} \mathbf{Z}_i\}}{\sum_{j=1}^nI(\delta^*_j\neq3,\mathbf{X}_{j}^{\ast}\geq \mathbf{X}_{i}^{\ast})\exp\{{\bm\gamma}^{\trans} \mathbf{Z}_j\}}\right\}^{I(\delta^*_i=1)},
\end{align}
where $\btheta= (\alpha_0, \alpha_1, {\bm\gamma}\trans)\trans$ and $\mathbf{V}_{i}=(1, H(L_{i}), \mathbf{Z}_i\trans)\trans$. Minimizing the negative log-likelihood $l_{\mathcal{D}_l}(\theta)=-\log(L_n(\btheta))$ and $l_{{\mathcal D}_l\cup {\mathcal D}_u}(\theta)=-\log(L_{n+N}(\btheta))$, we obtain the estimators $\hat{\bm\gamma}_2$ and $\bar{\bm\gamma}_2$, respectively. Let $\hat{\Beta}_{SSL2}$ denote the corresponding augmented estimator.

\subsection{Combined SSL estimator}
To further improve efficiency, we combine the augmented terms from both working models (\ref{model2}) and (\ref{model3}). Define $\hat{\bm\gamma}_3 = (\hat{\bm\gamma}\trans_1,\hat{\bm\gamma}\trans_2)\trans$ and $\bar{\bm\gamma}_3 = (\bar{\bm\gamma}\trans_1,\bar{\bm\gamma}\trans_2)\trans$, forming $2p$-dimensional vectors. Substituting $\hat{\bm\gamma}_3$ and $\bar{\bm\gamma}_3$ into the augmentation formula (\ref{aug}) yields the combined estimator $\hat{\Beta}_{SSL3}$, which is at least as efficient as or more efficient compared to the semi-supervised estimator with one working model alone. Because $\hat{\Beta}_{SSL3}$ is the most efficient estimator in the larger class of augmented estimators, $\hat{\Beta}_{AUG}=\hat{\Beta}_{SL}+\bm\lambda (\bar{{\bm\gamma_3}}-\hat{{\bm\gamma_3}})=\hat{\Beta}_{SL}+\bm\lambda_1 (\bar{{\bm\gamma_1}}-\hat{{\bm \gamma_1}})+\bm \lambda_2 (\bar{{\bm \gamma_2}}-\hat{{\bm \gamma_2}}).$

\section{Asymptotic properties}
We now establish the asymptotic properties of the proposed augmented estimators. Note that the surrogate data (without true labels) in $\mathcal{D}_l$ and $\mathcal{D}_u$ are independent and identically distributed. We state the required condition on the surrogate data in $\mathcal{D}_l$ for convenience. Let $O_i=(\mathbf{X}_{i}^{\ast\trans},\bm{\delta}_{i}^{\ast\trans},\mathbf{Z}_i\trans,L_i,U_i)\trans$ denote the observed surrogate data for subject $i$.
\begin{enumerate}[label={(C\arabic*)}]
\item There exists a ${\bm\gamma}^*$ such that
\[
\sqrt n(\hat{{\bm\gamma}}-{{\bm\gamma}}^*)=\frac{1}{\sqrt n}\sum_{i=1}^n {\bm \eta}_i+o_p(1),
\]
where ${\bm \eta}_i$ are independent and identically distributed with mean zero and finite covariance matrix.
\end{enumerate}

\begin{theorem}\label{Thm1}
Suppose that Condition (C1) holds. Then the asymptotic normality in (\ref{multi-normal}) holds with $\Omega=(1-\rho)\cov({\bm \xi}_i, {\bm \eta}_i)$ and ${\Sigma}_{{\bm\gamma}}=(1-\rho)\cov({\bm \eta}_i)$.
\end{theorem}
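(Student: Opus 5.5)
Under (C1), applied to the labeled sample and, since the surrogate data are i.i.d. across $\mathcal D_l\cup\mathcal D_u$, also to the full sample of size $n+N$, we have
\[
\sqrt n(\hat{\bm\gamma}-\bm\gamma^*)=\frac{1}{\sqrt n}\sum_{i=1}^n\bm\eta_i+o_p(1),\qquad
\sqrt{n+N}(\bar{\bm\gamma}-\bm\gamma^*)=\frac{1}{\sqrt{n+N}}\sum_{i=1}^{n+N}\bm\eta_i+o_p(1).
\]
Multiplying the second expansion by $\sqrt{n/(n+N)}=\sqrt\rho$ gives $\sqrt n(\bar{\bm\gamma}-\bm\gamma^*)=\frac{\sqrt n}{n+N}\sum_{i=1}^{n+N}\bm\eta_i+o_p(1)$. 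Subtracting,
\[
\sqrt n(\hat{\bm\gamma}-\bar{\bm\gamma})=\frac{1}{\sqrt n}\sum_{i=1}^n(1-\rho)\bm\eta_i-\frac{\sqrt n}{n+N}\sum_{i=n+1}^{n+N}\bm\eta_i+o_p(1).
\]
Together with the expansion $\sqrt n(\hat{\bm\beta}_{SL}-\bm\beta_0)=n^{-1/2}\sum_{i=1}^n\bm\xi_i+o_p(1)$ from Section~\ref{sl}, the joint vector is, up to $o_p(1)$, a sum of two \emph{independent} pieces:
\[
A_n=\frac{1}{\sqrt n}\sum_{i=1}^n\bigl(\bm\xi_i\trans,(1-\rho)\bm\eta_i\trans\bigr)\trans
\quad\text{and}\quad
B_n=\Bigl(\bm 0\trans,\,-\frac{\sqrt n}{n+N}\sum_{i=n+1}^{n+N}\bm\eta_i\trans\Bigr)\trans .
\]
$A_n$ uses only labeled subjects and $B_n$ only unlabeled ones, so they are independent. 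This assumes $\bm\xi_i$ depends on subject $i$ alone, which holds for the NPMLE influence function of Appendix~A. It also uses that $\bm\eta_i$ for $i\le n$ is the same function of $O_i$ whether it enters $\hat{\bm\gamma}$ or $\bar{\bm\gamma}$, because the common limit $\bm\gamma^*$ and the i.i.d.\ structure give the same influence function.

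For $A_n$, apply the multivariate Lindeberg--Feller CLT for triangular arrays, since $\rho=\rho_n$ varies with $n$. The summands have mean zero and finite second moments by (C1) and Appendix~A, and the Lindeberg condition follows because the summands are uniformly bounded by the fixed i.i.d.\ vector $(\bm\xi_i,\bm\eta_i)$. The covariance of $A_n$ has blocks $\cov(\bm\xi_i)=\Sigma$, $(1-\rho)\cov(\bm\xi_i,\bm\eta_i)$, and $(1-\rho)^2\cov(\bm\eta_i)$. For $B_n$, its covariance is $\frac{nN}{(n+N)^2}\cov(\bm\eta_i)=\rho(1-\rho)\cov(\bm\eta_i)$, and it is asymptotically normal by the CLT, or degenerate if $\rho\to0$. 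Adding the independent limits, the lower-right block becomes $(1-\rho)^2\cov(\bm\eta_i)+\rho(1-\rho)\cov(\bm\eta_i)=(1-\rho)\cov(\bm\eta_i)=\Sigma_{\bm\gamma}$. The off-diagonal block is $\Omega=(1-\rho)\cov(\bm\xi_i,\bm\eta_i)$, since $B_n$ has zero first component and is independent of the $\bm\xi$'s. This is (\ref{multi-normal}). Because $\rho$ varies with $n$, I read (\ref{multi-normal}) as a normal approximation with $\rho$-indexed covariance; equivalently, its limit is the value at $\rho=0$, and the two agree as $\rho\to0$. Joint convergence follows from Cramér--Wold combined with independence, via characteristic functions.

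The main obstacle is justifying the full-sample expansion for $\bar{\bm\gamma}$ with the same $\bm\eta_i$ and with a remainder $o_p(1)$ after scaling by $\sqrt{n+N}$. For the EM/NPMLE working model (\ref{model2}) and the partial-likelihood-type model (\ref{ln}), the estimator depends on the whole sample through risk sets, so $\bm\eta_i$ is a Hájek projection. I would argue that (C1) is an asymptotic statement about the estimator as a functional of i.i.d.\ surrogate data, so it holds for any sample size tending to infinity. Hence the remainder is $o_p(1)$ at scale $\sqrt{n+N}$, and therefore also after multiplication by $\sqrt\rho\le1$.
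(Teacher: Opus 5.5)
Your proposal is correct and follows the same route as the paper's proof. Like the paper, you apply (C1) to both the labeled and the full sample, rewrite $\sqrt n(\hat{\bm\gamma}-\bar{\bm\gamma})$ as $\frac{1-\rho}{\sqrt n}\sum_{i\le n}\bm\eta_i-\frac{\sqrt n}{n+N}\sum_{i>n}\bm\eta_i+o_p(1)$, and invoke the multivariate CLT with labeled/unlabeled independence. Your explicit block computation $(1-\rho)^2\cov(\bm\eta_i)+\rho(1-\rho)\cov(\bm\eta_i)=(1-\rho)\cov(\bm\eta_i)$, the triangular-array and $\rho_n$-dependence remarks, and your note that the same $\bm\eta_i$ must enter both expansions fill in details the paper leaves implicit.
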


Following the discussions in Section~\ref{ssl}, $\sqrt n(\hat{\Beta}_{SSL}-\Beta_0)$ converges in distribution to a normal random variable with mean $0$ and covariance $\Sigma-{\Omega}{\Sigma}_{{\bm\gamma}}^{-1}\Omega^{\trans}$.
To ensure that the estimator of this covariance matrix is positive definite, we suggest using
\[
\hat\Sigma=\frac{1}{ n}\sum_{\mathcal{D}_l} \hat{\bm \xi}_i\hat{\bm \xi}_i\trans,\quad
\hat\Sigma_{{\bm\gamma}}=\frac{N}{n(n+N)}\sum_{\mathcal{D}_l} \hat{\bm \eta}_i\hat {\bm \eta}_i\trans,\quad
\hat\Omega=\frac{N}{n(n+N)}\sum_{\mathcal{D}_l} \hat{\bm \xi}_i\hat{\bm \eta}_i\trans,
\]
where $\hat{\bm \xi}_i$ and $\hat{\bm \eta}_i$ are the estimators of ${\bm \xi}_i$ and ${\bm \eta}_i$ obtained by plugging in $\hat\Beta_{SL}$ and $\hat{{\bm\gamma}}$, respectively.

\bigskip
\noindent{\bf Corollary 1} ~When the working model for $T^*$ is the same as the model for $T$, model (\ref{model2}), the resulting estimator $\hat{\bm\gamma}_1$ satisfies Condition (C1) with $\bm\eta_i$ defined in Appendix B.

\bigskip
\noindent{\bf Corollary 2}  ~When the working model for $T^*$ is model (\ref{model3}), the resulting estimator $\hat{\bm\gamma}_2$ satisfies Condition (C1) with $\bm\eta_i$ defined in Appendix B.

\bigskip
\noindent{\bf Corollary 3} ~ The stacked vector $\hat{\bm\gamma}_3 = (\hat{\bm\gamma}_1^\top, \hat{\bm\gamma}_2^\top)^\top$ satisfies Condition (C1) with $\bm\eta_{3i}=(\bm\eta_{1i}^\top,\bm\eta_{2i}^\top)^\top$. Moreover,
the combined estimator $\hat{\Beta}_{SSL3}$ is asymptotically more efficient than $\hat{\Beta}_{SSL1}$ and $\hat{\Beta}_{SSL2}$, in the sense that $\Sigma_3 \preceq \Sigma_1$ and $\Sigma_3 \preceq \Sigma_2$, where $\Sigma_j = \Sigma - \Omega_j \Sigma_{\gamma_j}^{-1} \Omega_j^\top$ ($j=1,2,3$) are the asymptotic covariance matrices of the three augmented estimators and $\Sigma_3 \preceq \Sigma_1$ means that $\Sigma_1-\Sigma_3$ is positive definite.\bigskip

The proofs of Theorem \ref{Thm1} and the Corollaries 1-3 are in Appendix B.

\section{Simulation Study}

In this section, we evaluated the finite-sample performance of the proposed semi-supervised estimators for doubly censored survival data and compared to the supervised estimator via simulation studies.

We simulated a cohort of size $n+N$, where $n$ is the number of labeled samples and $N$ is the number of unlabeled samples, and $N=5n$. For each subject $i$, we generated a bivariate normal covariate vector $\mathbf{Z}_i \sim \mathcal{N}(0,\Sigma)$, with $\Sigma_{jj}=1$, $\Sigma_{jk}=0.3$ ($j\neq k$), and $j, k=1, 2$. We set $h(t)=h^*(t)=\log(0.5t)$. According to the transformation linear models (\ref{model}) and (\ref{model2}), the true survival time and the surrogate survival time were generated as follows
\[
T_i = 2\exp(-\Beta'\mathbf{Z}_i+\varepsilon_i), \qquad
T_i^* = 2\exp(-{\bm\gamma}'\mathbf{Z}_i+\varepsilon_i^*),
\]
with  regression coefficients $\bm\beta=(0.5,-0.3)^\top$ and $\bm\gamma=(-0.3,0.7)^\top$. 

The random errors $\varepsilon$ and $\varepsilon^*$ are generated via a Gaussian copula to induce dependence. Specifically, we first generated $(w,w^*)$ from a bivariate normal distribution with correlation $\rho=0.85$, then apply the probability integral transform $u=\Phi(w)$, $u^*=\Phi(w^*)$ to obtain $U(0,1)$ margins. Finally, using the known survival function $S_\varepsilon(x)=\exp\{-G(\exp(x))\}$ \citep{zeng2016maximum,li2018class}, inverse transform sampling yields
\[
\varepsilon_i = \log\bigl(G^{-1}(-\log u_i, r)\bigr), \qquad
\varepsilon^*_i = \log\bigl(G^{-1}(-\log u^*_i, r^*)\bigr),
\]
where $G^{-1}$ is the inverse of $G(x,r)=\log(1+rx)/r$ ($r\geq0$). Varying $r$ allows the transformation model to recover several common models. For example, $r=0$ corresponds to the proportional hazards model (extreme-value error) and $r=1$ to the proportional odds model (logistic error).

The censoring times are generated as $L_i \sim U(0.5\tau_l,\,1.5\tau_l)$ and $U_i \sim U(0.5\tau_r,\,1.5\tau_r)$, where $\tau_l$ and $\tau_r$ denote the  20th and 80th percentiles of the survival time distribution, respectively, yielding 20\% left censoring and 20\% right censoring. For unlabeled data, the true event times are not observed; only the censoring times and surrogate outcomes are retained.

Table~\ref{tab:example} presents the estimation results for regression parameters $\bm\beta$ under different combinations of $r, r^*$ and sample sizes. When $r= r^*$, $\varepsilon$ and $\varepsilon^*$ follow the same distribution while not otherwise. Here $n=200$  and $n=400$ are considered. All results are based on 500 replications. Reported quantities are the bias (Bias), sample standard error (SE), average of estimated standard errors (ESE), 95\% empirical coverage probability (CP), and relative efficiency (RE), defined as the ratio of ESE of SL to that of SSL.

We observe that all biases are small, SEs are close to ESEs, and CPs are near the nominal level 95\%. The proposed SSL estimators significantly reduce the ESE and achieve RE larger than 1 compared to the supervised estimator under both working models. In particular, the SSL estimator under the same working transformation model as the true event time, SSL1, tends to be more efficient than that under the model general transformation model, SSL2, for most cases. The combined estimator, SSL3, is the most efficient overall, which is consistent with Corollary 3. Interestingly, when $r^* \neq r$, SSL1 based on the working model with $r$ is in fact a misspecified model for $T^*$, yet it still performs remarkably well, demonstrating the robustness of SSL1, and thus proposed augmentation framework.

\begin{table}[htbp]
\centering
\caption{Estimation results for regression parameters with simulated data.}
\resizebox{\textwidth}{!}{
\label{tab:example}
\begin{tabular}{lll l |rllll |rllll}
\hline
&&&&$\beta_1$&&&&&$\beta_2$&&&&\\
$r$&$r^*$&$n$&Methods& Bias & SE & ESE & CP & RE & Bias & SE & ESE & CP & RE \\
\hline
0&0&200 & SL & .0099 & .0843 & .0844 & 95.1 & 1.0000 & -.0053 & .0796 & .0812 & 95.5 & 1.0000 \\
&&& SSL1 & .0037 & .0605 & .0600 & 93.6 & 1.4057 & -.0009 & .0601 & .0581 & 93.6 & 1.3976 \\
&&& SSL2 & .0032 & .0616 & .0602 & 93.8 & 1.4024 & .0032 & .0616 & .0602 & 93.8 & 1.3591 \\
&&& SSL3 & .0038 & .0624 & .0595 & 93.0 & 1.4191 & -.0004 & .0607 & .0577 & 93.1 & 1.4061 \\
&&400 & SL & .0008 & .0564 & .0581 & 95.2 & 1.0000 & -.0006 & .0555 & .0560 & 94.8 & 1.0000 \\
&&& SSL1 & .0018 & .0418 & .0411 & 94.7 & 1.4131 & -.0001 & .0409 & .0400 & 94.0 & 1.4006 \\
&&& SSL2 & .0018 & .0419 & .0411 & 94.6 & 1.4132 & -.0003 & .0406 & .0410 & 95.1 & 1.3655 \\
&&& SSL3 & .0018 & .0421 & .0409 & 94.6 & 1.4227 & .0001 & .0410 & .0399 & 94.1 & 1.4052 \\
\hline
0&1&200 & SL & .0052 & .0826 & .0842 & 95.2 & 1.0000 & -.0043 & .0793 & .0817 & 95.1 & 1.0000 \\
&&& SSL1 & .0037 & .0608 & .0609 & 94.7 & 1.3838 & .0039 & .0568 & .0580 & 94.9 & 1.4076 \\
&&& SSL2 & .0039 & .0615 & .0610 & 94.2 & 1.3814 & .0043 & .0575 & .0585 & 94.4 & 1.3968 \\
&&& SSL3 & .0039 & .0613 & .0603 & 94.1 & 1.3959 & .0038 & .0580 & .0575 & 94.7 & 1.4188 \\
&&400 & SL & -.0002 & .0580 & .0582 & 94.8 & 1.0000 & .0020 & .0575 & .0561 & 94.2 & 1.0000 \\
&&& SSL1 & -.0014 & .0417 & .0418 & 94.9 & 1.3916 & .0026 & .0397 & .0398 & 94.4 & 1.4096 \\
&&& SSL2 & -.0014 & .0419 & .0418 & 94.7 & 1.3915 & .0023 & .0399 & .0400 & 94.5 & 1.4014 \\
&&& SSL3 & -.0011 & .0418 & .0416 & 95.1 & 1.4002 & .0027 & .0399 & .0396 & 94.0 & 1.4159 \\
\hline
1&0&200 & SL & -.0097 & .1258 & .1282 & 94.8 & 1.0000 & .0067 & .1287 & .1269 & 94.5 & 1.0000 \\
&&& SSL1 & -.0077 & .0861 & .0855 & 94.5 & 1.4995 & .0063 & .0868 & .0869 & 95.5 & 1.4601 \\
&&& SSL2 & -.0100 & .0961 & .0942 & 93.4 & 1.3619 & .0077 & .0924 & .0967 & 95.4 & 1.3119 \\
&&& SSL3 & -.0077 & .0861 & .0851 & 93.9 & 1.5077 & .0058 & .0866 & .0865 & 95.2 & 1.4662 \\
&&400 & SL & -.0090 & .0847 & .0899 & 96.3 & 1.0000 & .0057 & .0871 & .0889 & 95.3 & 1.0000 \\
&&& SSL1 & -.0099 & .0589 & .0601 & 95.3 & 1.4973 & .0060 & .0611 & .0608 & 94.6 & 1.4622 \\
&&& SSL2 & -.0120 & .0645 & .0661 & 94.9 & 1.3597 & .0069 & .0658 & .0678 & 95.7 & 1.3109 \\
&&& SSL3 & -.0103 & .0588 & .0598 & 94.8 & 1.5036 & .0057 & .0609 & .0606 & 94.9 & 1.4655 \\
\hline
1&1&200 & SL & -.0036 & .1241 & .1290 & 96.3 & 1.0000 & .0033 & .1235 & .1269 & 95.5 & 1.0000 \\
&&& SSL1 & -.0042 & .0876 & .0855 & 93.7 & 1.5090 & .0055 & .0843 & .0842 & 94.2 & 1.5067 \\
&&& SSL2 & -.0071 & .0980 & .0959 & 93.8 & 1.3451 & .0104 & .0962 & .0961 & 94.6 & 1.3209 \\
&&& SSL3 & -.0043 & .0876 & .0851 & 93.6 & 1.5149 & .0049 & .0847 & .0839 & 94.5 & 1.5118 \\
&&400 & SL & -.0083 & .0889 & .0900 & 95.5 & 1.0000 & .0049 & .0856 & .0887 & 95.0 & 1.0000 \\
&&& SSL1 & -.0076 & .0590 & .0592 & 94.9 & 1.5206 & .0034 & .0591 & .0585 & 95.8 & 1.5174 \\
&&& SSL2 & -.0095 & .0652 & .0669 & 94.6 & 1.3456 & .0055 & .0666 & .0673 & 95.5 & 1.3187 \\
&&& SSL3 & -.0075 & .0587 & .0591 & 95.3 & 1.5241 & .0032 & .0593 & .0583 & 95.8 & 1.5200 \\
\hline
\end{tabular}}
\end{table}

\section{Real Data Analysis}
To illustrate the proposed method, we analyzed EHR data from a health system from the US to investigate risk factors for type 2 diabetes (T2D). The cohort consists of 115,236 patients with at least one ICD code \citep{Liao2010electronic, Cipparone2015inaccuracy}, including 64,808 females and 50,428 males. The event time of interest, $T$, is age at T2D onset, which is doubly censored: some patients already had T2D at enrollment (left censoring), while others had not developed T2D by their last visit (right censoring). The left-censoring time $L$ is age at first visit, and the right-censoring time $U$ is age at last recorded visit. The true T2D onset time was obtained via manual chart review for a labeled subset of 1,613 patients (915 females, 698 males). Within this labeled sample, $T$ was exactly observed for 168 (10.4\%) patients, right-censored for 1,401 (86.9\%), and left-censored for 44 (2.7\%). We constructed a surrogate outcome $T^*$ as age at the first recorded encounter with the T2D PheCode (PheCode: 250.2). This surrogate is also doubly censored by $L$ and $U$. In the full cohort, $T^*$ was exactly observed for 16,987 (14.7\%) patients, right-censored for 95,597 (83.0\%), and left-censored for 2,652 (2.3\%).

We fitted the true event time $T$ using the linear transformation model (\ref{model}) with covariates: baseline age (age at enrollment), gender, and race. Both working models (\ref{model2}) and (\ref{model3}) were fitted to the surrogate. Table~\ref{real} presents the supervised estimator (SL) estimator and the three SSL estimators. The estimates for baseline age are negative, indicating that higher enrollment age is associated with later T2D onset. Male gender and non-white race are associated with higher risk (positive and negative coefficients, respectively). All the estimates are significant. The SSL estimators achieve substantially smaller standard errors than the SL estimator, with relative efficiency the highest (as high as 1.81) for the combined estimator. These results demonstrate that SSL can effectively leverage unlabeled data to improve inference, reducing reliance on costly manual chart review.

\begin{table}[ht]
\centering
\caption{Estimates of covariate coefficient with estimated standard errors (ESE) and relative efficiency (RE) for the T2D data.}
\begin{tabular}{rrrrrrrrrr}
\hline
& $\beta_{\text{bas-age}}$ & ESE & RE & $\beta_{\text{male}}$ & ESE & RE & $\beta_{\text{white}}$ & ESE & RE \\
\hline
SL & -.0531 & .0039 & 1.0000 & .7075 & .1788 & 1.0000 & -1.4695 & .2211 & 1.0000 \\
SSL1 & -.0560 & .0023 & 1.6956 & .5741 & .0999 & 1.7897 & -1.2419 & .1376 & 1.6068 \\
SSL2 & -.0567 & .0026 & 1.5000 & .6802 & .1068 & 1.6741 & -1.2309 & .1418 & 1.5592 \\
SSL3 & -.0569 & .0022 & 1.7727 & .5989 & .0988 & 1.8097 & -1.2001 & .1320 & 1.6750 \\
\hline
\end{tabular}
\label{real}
\end{table}

\section{Discussion}
This paper addressed a key gap in semi-supervised risk prediction with doubly censored outcomes. We developed SSL estimators of risk effects that augment a supervised estimator based on labeled data only by leveraging a large unlabeled sample with surrogate outcomes. We showed that the proposed SSL estimators are consistent and achieve higher efficiency than the supervised estimator, regardless of whether the working models for the surrogate outcome are correctly specified. A combined SSL estimator is further proposed, which is at least as efficient as any single-working-model SSL estimator. The efficiency gains reduce the need for labor-intensive manual chart review and improve the feasibility of using EHR data to study disease risk factors.

The efficiency gain depends on the strength of the association between the surrogate and the true outcome. When the surrogate is strongly correlated with the true event, the SSL estimator usually achieve substantial efficiency gain. Our numerical results show meaningful efficiency improvements under both working models, with model (\ref{model2}) generally yielding larger gains compared to model (\ref{model3}), albeit at higher computational cost. The choice of working model should balance expected efficiency gain against computational resources. Although we focused on two working models in this paper, with more working models, the semi-supervised estimation procedure is the same. The adding of useful working models that further essentially improve the efficiency depends on empirical experience.

While it is possible to construct additional augmented terms from more surrogates, working models, or estimation methods and combine them to further reduce the asymptotic variance, the incremental efficiency gain may be small if the new augmented terms are highly correlated with existing ones or only weakly associated with the influence function $\bm\xi_i$. This would require perfect knowledge of the conditional expectation of $\hat{\bm\beta}_{SL}$ given the unlabeled data, which is generally unattainable. Our numerical experience suggests that a small set of judiciously chosen working models could capture most of the achievable efficiency.

In the setting we consider, the labeled subset is very small, so the label-missing rate is close to one; standard missing-data methods are not well suited. Our approach assumes that labels are missing completely at random (MCAR), motivated by the T2D study in which patients were randomly sampled for chart review. Extending the framework to accommodate other missing mechanisms, such as missing at random (MAR), is an important direction for future work. We also assume that censoring times $(L,U)$ are independent of covariates. While this is standard for doubly censored data \citep{gu1993asymptotic}, EHR settings may involve covariate-dependent censoring due to healthcare utilization patterns. Extending the method to allow for covariate-dependent censoring is warranted.

\bibliographystyle{apalike}
\bibliography{literature}

\newpage
\section*{Appendix}
\subsection*{Appendix A}
\renewcommand{\thesection}{\Alph{section}}

\noindent \textbf{Expression of \(\bm{\xi}_i\).}
Here we derive the influence function of \(\hat{\bm{\beta}}_{SL}\).  Write the complete parameter of the EM algorithm as \(\bm{\zeta} = (\bm{\beta}^{\top},\lambda_1,\dots,\lambda_{K_n})^{\top}\).  At the \(m\)th iteration the \(Q\)-function is
\[
Q(\bm{\zeta},\bm{\zeta}^{(m)}) = \sum_{i=1}^{n}\sum_{k=1}^{K_n}\Bigl\{
\mathbf{Z}_i^{\top}\bm{\beta}\; \mathrm{E}_{\bm{\zeta}^{(m)}}(N_{ik})
+ \log\lambda_k\; \mathrm{E}_{\bm{\zeta}^{(m)}}(N_{ik})
- \lambda_k e^{\mathbf{Z}_i^{\top}\bm{\beta}} \mathrm{E}_{\bm{\zeta}^{(m)}}(\mu_i) \Bigr\},
\]
where the conditional expectations are taken under the current parameter \(\bm{\zeta}^{(m)}\) and have been defined in Section~\ref{sl}. 

Given \(\bm{\beta}\), the profile estimator of \(\lambda_k\) is
\[
\lambda_k(\bm{\beta}) = \frac{\sum_{i=1}^{n} \mathrm{E}_{\bm{\zeta}_{\bm{\beta}}}(N_{ik})}{\sum_{i=1}^{n} \mathrm{E}_{\bm{\zeta}_{\bm{\beta}}}(\mu_i) e^{\mathbf{Z}_i^{\top}\bm{\beta}}},
\qquad \bm{\zeta}_{\bm{\beta}} = (\bm{\beta}^{\top}, \lambda_1(\bm{\beta}),\dots,\lambda_{K_n}(\bm{\beta}))^{\top}.
\]
Substituting this back into the \(Q\)-function yields the profiled objective
\[
\widetilde{Q}(\bm{\beta},\bm{\beta}^{(m)})
= \sum_{i=1}^{n}\sum_{k=1}^{K_n}\Bigl\{
\mathbf{Z}_i^{\top}\bm{\beta}\; \mathrm{E}_{\bm{\zeta}^{(m)}}(N_{ik})
+ \log\lambda_k(\bm{\beta})\; \mathrm{E}_{\bm{\zeta}^{(m)}}(N_{ik})
- \lambda_k(\bm{\beta}) e^{\mathbf{Z}_i^{\top}\bm{\beta}} \mathrm{E}_{\bm{\zeta}^{(m)}}(\mu_i) \Bigr\}.
\]
The estimator \(\hat{\bm{\beta}}_{SL}\) satisfies \(\frac{\partial}{\partial\bm{\beta}}\widetilde{Q}(\bm{\beta},\hat{\bm{\beta}}_{SL})\big|_{\bm{\beta}=\hat{\bm{\beta}}_{SL}} = 0\).
Define
\[
\tilde{\psi}_i(\bm{\beta},\tilde{\bm{\beta}}) = \sum_{k=1}^{K_n}\Bigl\{
\mathbf{Z}_i \, \mathrm{E}_{\bm{\zeta}_{\tilde{\bm{\beta}}}}(N_{ik})
+ \frac{\partial/\partial\bm{\beta}\,\lambda_k(\bm{\beta})}{\lambda_k(\bm{\beta})} \mathrm{E}_{\bm{\zeta}_{\tilde{\bm{\beta}}}}(N_{ik})
- \bigl(\lambda_k(\tilde{\bm{\beta}})\mathbf{Z}_i + \frac{\partial\lambda_k(\bm{\beta})}{\partial\bm{\beta}}\bigr) e^{\mathbf{Z}_i^{\top}\bm{\beta}} \mathrm{E}_{\bm{\zeta}_{\tilde{\bm{\beta}}}}(\mu_i)
\Bigr\}.
\]
Then the first‑order condition can be written as \(\sum_{i=1}^{n}\tilde{\psi}_i(\hat{\bm{\beta}}_{SL},\hat{\bm{\beta}}_{SL})=0\).

Expanding \(\sum_i \psi_i(\hat{\bm{\beta}}_{SL},\hat{\bm{\beta}}_{SL}) \) around the true value \(\bm{\beta}_0\) gives
\[
0 = \sum_{i=1}^{n}\psi_i(\bm{\beta}_0,\bm{\beta}_0)
+ \sum_{i=1}^{n}\frac{\partial}{\partial\bm{\beta}^{\top}}\psi_i(\bm{\beta}_0,\bm{\beta}_0)(\hat{\bm{\beta}}_{SL}-\bm{\beta}_0) + o_p(n^{1/2}).
\]
Set \(A_{\beta} = -\,\mathrm{E}\bigl[ \frac{\partial}{\partial\bm{\beta}^{\top}}\psi_i(\bm{\beta}_0,\bm{\beta}_0) \bigr]\).
Then
\[
\sqrt{n}(\hat{\bm{\beta}}_{SL} - \bm{\beta}_0)
= A_{\beta}^{-1}\,\frac{1}{\sqrt{n}}\sum_{i=1}^{n}\psi_i(\bm{\beta}_0,\bm{\beta}_0) + o_p(1).
\]
Hence the influence function is \(\bm{\xi}_i= A_{\beta}^{-1}\psi_i(\bm{\beta}_0,\bm{\beta}_0)\).

\subsection*{Appendix B}
\noindent{\bf Proof of Theorem~\ref{Thm1}.}
Since $\hat{{\bm\gamma}}$ and $\bar{{\bm\gamma}}$ are computed from the labeled data and all data respectively, under Condition (C1) we have
\begin{align*}
\sqrt n(\hat{{\bm\gamma}}-\bar{{\bm\gamma}})&=\sqrt n(\hat{{\bm\gamma}}-{\bm\gamma}^*)-\sqrt{\frac{n}{n+N}}\sqrt{n+N}(\bar{{\bm\gamma}}-{\bm\gamma}^*)\\
&=\frac{1-\rho}{\sqrt n}\sum_{i=1}^n\bm\eta_i-\frac{\sqrt n}{n+N}\sum_{i=n+1}^{n+N}\bm\eta_i+o_p(1).
\end{align*}
Combining with $\sqrt n(\hat\Beta_{SL}-\Beta_0)=\frac{1}{\sqrt n}\sum_{i=1}^n\bm\xi_i+o_p(1)$ and using the multivariate central limit theorem together with independence between labeled and unlabeled data, the joint asymptotic normality follows with covariance components as stated.

\bigskip
\noindent \textbf{Proof of Corollary 1.}
When $\bm\gamma$ is estimated by applying the same EM algorithm as for $\hat{\bm\beta}_{SL}$ to the surrogate observations, it is a working maximum likelihood estimator. Let $\bm\gamma^\ast$ be the true parameter value under the working model. The $Q$-function and the profiled objective $\widetilde{Q}(\bm\gamma,\bm\gamma^{(m)})$ are defined analogously to those in the derivation for $\bm{\xi}_i$, with $\bm\beta$ replaced by $\bm\gamma$, and all conditional expectations computed with the surrogate data. At convergence, the estimator $\hat{\bm\gamma}$ satisfies
\[
\frac{\partial}{\partial\bm\gamma}\widetilde{Q}(\bm\gamma,\hat{\bm\gamma})\bigg|_{\bm\gamma=\hat{\bm\gamma}} = 0.
\]

Define
\[
\tilde{\eta}_i(\bm\gamma,\tilde{\bm\gamma}) = \sum_{k=1}^{K_n} \Bigl\{
\mathbf{Z}_i \, \mathrm{E}_{\bm\theta_{\tilde{\bm\gamma}}}(N_{ik})
+ \frac{\partial/\partial\bm\gamma\,\lambda_k(\bm\gamma)}{\lambda_k(\bm\gamma)} \mathrm{E}_{\bm\theta_{\tilde{\bm\gamma}}}(N_{ik})
- \bigl(\lambda_k(\tilde{\bm\gamma})\mathbf{Z}_i + \frac{\partial\lambda_k(\bm\gamma)}{\partial\bm\gamma}\bigr) e^{\mathbf{Z}_i^\top\bm\gamma} \mathrm{E}_{\bm\theta_{\tilde{\bm\gamma}}}(\mu_i)
\Bigr\},
\]
where $\lambda_k(\bm\gamma)$ is the profiled estimator of the baseline jump. Then the first-order condition becomes $\sum_{i=1}^n \tilde{\eta}_i(\hat{\bm\gamma},\hat{\bm\gamma})=0$.

Expanding this around the true value $\bm\gamma^\ast$ and following the same steps as in the derivation of $\bm{\xi}_i$, we obtain
\[
\sqrt{n}(\hat{\bm\gamma} - \bm\gamma^\ast) = A_{\bm\gamma}^{-1}\,\frac{1}{\sqrt{n}}\sum_{i=1}^n \tilde{\eta}_i(\bm\gamma^\ast,\bm\gamma^\ast) + o_p(1),
\]
where $A_{\bm\gamma} = -\,\mathrm{E}\bigl[ \frac{\partial}{\partial\bm\gamma^\top}\tilde{\eta}_i(\bm\gamma^\ast,\bm\gamma^\ast) \bigr]$.
Therefore, Condition (C1) holds with $\bm\eta_i = A_{\bm\gamma}^{-1}\,\tilde{\eta}_i(\bm\gamma^\ast,\bm\gamma^\ast)$.

\bigskip
\noindent \textbf{Proof of Corollary 2.}
Under working model (\ref{model2}), the estimator $\hat{\bm\gamma}_2$ is obtained by minimizing the negative composite log-likelihood $-\log L_n(\bm\theta)$ based on the surrogate observations $(X_i^\ast,\delta_i^\ast)$. The log-likelihood can be decomposed as $l_n(\bm\theta) = l_{1n}(\bm\theta) + l_{2n}(\bm\gamma)$, where
\begin{align*}
l_{1n}(\bm\theta) &= \sum_{i=1}^n I(\delta_i^\ast = 3)\,\bm\theta^\top \mathbf{V}_i - \ln\bigl(1 + \exp\{\bm\theta^\top \mathbf{V}_i\}\bigr), \\
l_{2n}(\bm\gamma) &= \sum_{i=1}^n I(\delta_i^\ast = 1) \Bigl[ \bm\gamma^\top \mathbf{Z}_i - \log \sum_{j=1}^n Y_j(X_i^\ast) e^{\bm\gamma^\top \mathbf{Z}_j} \Bigr],
\end{align*}
with $\bm\theta = (\alpha_0,\alpha_1,\bm\gamma^\top)^\top$, $\mathbf{V}_i = (1, H(L_i), \mathbf{Z}_i^\top)^\top$, $N_i(t) = I(\delta_i^\ast = 1, X_i^\ast \le t)$, and $Y_i(t) = I(\delta_i^\ast \neq 3, X_i^\ast \ge t)$.

The score functions are
\begin{align*}
\dot{l}_{1n}(\bm\theta) &= \sum_{i=1}^n \Bigl[ I(\delta_i^\ast = 3)\,\mathbf{V}_i - \frac{\exp\{\bm\theta^\top \mathbf{V}_i\}}{1+\exp\{\bm\theta^\top \mathbf{V}_i\}}\mathbf{V}_i \Bigr], \\
\dot{l}_{2n}(\bm\gamma) &= \sum_{i=1}^n \int_0^\tau \bigl[ \mathbf{Z}_i - \bar{\mathbf{Z}}(t;\bm\gamma) \bigr]\,\mathrm{d}N_i(t),
\end{align*}
where $\bar{\mathbf{Z}}(t;\bm\gamma) = S^{(1)}(t;\bm\gamma) / S^{(0)}(t;\bm\gamma)$ and $S^{(k)}(t;\bm\gamma) = n^{-1}\sum_{j=1}^n Y_j(t) e^{\bm\gamma^\top \mathbf{Z}_j} \mathbf{Z}_j^{\otimes k}$. The overall score is $\dot{l}_n(\bm\theta) = \dot{l}_{1n}(\bm\theta) + J_{\bm\gamma}\dot{l}_{2n}(\bm\gamma)$ with $J_{\bm\gamma} = (0_p,0_p,I_p)^\top$.

The Hessian $\ddot{l}_n(\bm\theta)$ is negative definite, guaranteeing a unique maximizer $\hat{\bm\theta}$ of $l_n(\bm\theta)$. By standard empirical process arguments, $\hat{\bm\theta} \xrightarrow{p} \bm\theta^\ast$, the maximizer of the limiting concave function, and $\bm\gamma^\ast$ is the corresponding subvector.

Define $s^{(k)}(t)$ as the limit of $S^{(k)}(t;\bm\gamma^\ast)$ and $\bar{\mathbf{z}}(t) = s^{(1)}(t)/s^{(0)}(t)$. Using the martingale decomposition of the Cox partial score \citep{LWYY2000}, we obtain
\begin{align*}
\frac{1}{\sqrt{n}}\dot{l}_{2n}(\bm\gamma^\ast) &= \frac{1}{\sqrt{n}}\sum_{i=1}^n \int_0^\tau \bigl( \mathbf{Z}_i - \bar{\mathbf{z}}(t) \bigr)\,\mathrm{d}M_i(t) + o_p(1),
\end{align*}
where $\mathrm{d}M_i(t) = \mathrm{d}N_i(t) - Y_i(t) e^{\bm\gamma^{*\top} \mathbf{Z}_i} \mathrm{d}\Lambda(t)$, and $\Lambda(t) = \int_0^t \frac{\mathrm{d}A(u)}{s^{(0)}(u)}$ with $A(t)$ the limit of $n^{-1}\sum_i N_i(t)$.

Consequently,
\[
\frac{1}{\sqrt{n}}\dot{l}_n(\bm\theta^\ast) = \frac{1}{\sqrt{n}}\sum_{i=1}^n \widetilde{\bm\eta}_i + o_p(1),
\]
where
\[
\widetilde{\bm\eta}_i = I(\delta_i^\ast = 3)\,\mathbf{V}_i - \frac{\exp\{\bm\theta^{*\top} \mathbf{V}_i\}}{1+\exp\{\bm\theta^{*\top} \mathbf{V}_i\}}\mathbf{V}_i + J_{\bm\gamma} \int_0^\tau \bigl( \mathbf{Z}_i - \bar{\mathbf{z}}(t) \bigr)\,\mathrm{d}M_i(t).
\]

Since $\dot{l}_n(\hat{\bm\theta}) = 0$, a Taylor expansion yields
\[
-\frac{1}{\sqrt{n}}\dot{l}_n(\bm\theta^\ast) = \bigl( n^{-1}\ddot{l}_n(\bm\theta^\ast) \bigr) \sqrt{n}(\hat{\bm\theta} - \bm\theta^\ast) + o_p(1).
\]
Let $\bm\Omega$ be the limit of $n^{-1}\ddot{l}_n(\bm\theta^\ast)$. Then
\[
\sqrt{n}(\hat{\bm\theta} - \bm\theta^\ast) = -\bm\Omega^{-1} \frac{1}{\sqrt{n}}\sum_{i=1}^n \widetilde{\bm\eta}_i + o_p(1).
\]

The estimator of $\bm\gamma$ is $\hat{\bm\gamma} = \widehat{\bm\gamma}_2$, and its influence function is obtained by taking the last $p$ components of $-\bm\Omega^{-1}\widetilde{\bm\eta}_i$. That is,
\[
\sqrt{n}(\hat{\bm\gamma} - \bm\gamma^\ast) = \frac{1}{\sqrt{n}}\sum_{i=1}^n \bm\eta_{2i} + o_p(1),
\qquad
\bm\eta_{2i} = -\bigl[ \bm\Omega^{-1} \widetilde{\bm\eta}_i \bigr]_{\bm\gamma},
\]
where $[\cdot]_{\bm\gamma}$ denotes the subvector corresponding to $\bm\gamma$. Since $\mathbb{E}[\widetilde{\bm\eta}_i] = \dot{l}(\bm\theta^\ast) = 0$, we have $\mathbb{E}[\bm\eta_{2i}] = 0$ with finite covariance, which verifies Condition (C1) for $\hat{\bm\gamma}_2$.

\bigskip
\noindent{\bf Proof of Corollary 3.}
From Corollaries~1 and~2, for \(j=1,2\),
\[
\sqrt{n}(\hat{\bm\gamma}_j - \bm\gamma_j^*) = \frac{1}{\sqrt{n}}\sum_{i=1}^n \bm\eta_{ji} + o_p(1).
\]
Stacking these expansions yields
\[
\sqrt{n}(\hat{\bm\gamma}_3 - \bm\gamma_3^*) = \frac{1}{\sqrt{n}}\sum_{i=1}^n \begin{pmatrix} \bm\eta_{1i} \\ \bm\eta_{2i} \end{pmatrix} + o_p(1),
\]
so \(\hat{\bm\gamma}_3\) satisfies (C1) with \(\bm\eta_{3i}=(\bm\eta_{1i}^\top,\bm\eta_{2i}^\top)^\top\).

To compare asymptotic efficiency, recall from Theorem~\ref{Thm1} that the asymptotic covariance matrix of the  SSL estimator is \(\bm\Sigma_j = \bm\Sigma - \bm\Omega_j\bm\Sigma_{\gamma_j}^{-1}\bm\Omega_j^\top\), where \(\bm\Omega_j = (1-\rho)\mathrm{Cov}(\bm\xi_i,\bm\eta_{ji})\) and \(\bm\Sigma_{\gamma_j} = (1-\rho)\mathrm{Var}(\bm\eta_{ji})\).
Let \(\bm U_j = \mathrm{Cov}(\bm\xi_i,\bm\eta_{ji})\) and \(\bm V_j = \mathrm{Cov}(\bm\eta_{ji})\), so that \(\bm\Sigma_j = \bm\Sigma - (1-\rho)\,\bm U_j \bm V_j^{-1} \bm U_j^\top\).

For \(j=3\), since \(\bm\eta_{3i} = (\bm\eta_{1i}^\top,\bm\eta_{2i}^\top)^\top\).
Then \(\bm U_3 = [\bm U_1,\;\bm U_2]\) and
\[
\bm V_3 = \begin{pmatrix}
\bm V_1 & \bm V_{12} \\
\bm V_{21} & \bm V_2
\end{pmatrix},
\qquad \bm V_{12} = \mathrm{Cov}(\bm\eta_{1i},\bm\eta_{2i}),\quad \bm V_{21}=\bm V_{12}^\top.
\]
Provided \(\bm V_3\) is positive definite, standard block matrix inversion gives
\[
\bm V_3^{-1} =
\begin{pmatrix}
\bm V_1^{-1} + \bm V_1^{-1}\bm V_{12} \bm S^{-1} \bm V_{21}\bm V_1^{-1} & -\bm V_1^{-1}\bm V_{12} \bm S^{-1} \\
-\bm S^{-1}\bm V_{21}\bm V_1^{-1} & \bm S^{-1}
\end{pmatrix},
\]
where \(\bm S = \bm V_2 - \bm V_{21}\bm V_1^{-1}\bm V_{12}\) is the Schur complement.
Consequently,
\begin{align*}
\bm U_3 \bm V_3^{-1} \bm U_3^\top
&= [\bm U_1,\;\bm U_2] \bm V_3^{-1} [\bm U_1,\;\bm U_2]^\top \\
&= \bm U_1 \bm V_1^{-1} \bm U_1^\top
+ (\bm U_2 - \bm U_1\bm V_1^{-1}\bm V_{12})\,\bm S^{-1}\,(\bm U_2 - \bm U_1\bm V_1^{-1}\bm V_{12})^\top.
\end{align*}
Because \(\bm S\) is positive definite, the second term is positive semidefinite. Hence
\[
\bm U_3 \bm V_3^{-1} \bm U_3^\top \succeq \bm U_1 \bm V_1^{-1} \bm U_1^\top,
\]
and therefore
\[
\bm\Sigma_3 = \bm\Sigma - (1-\rho)\,\bm U_3 \bm V_3^{-1} \bm U_3^\top \preceq \bm\Sigma - (1-\rho)\,\bm U_1 \bm V_1^{-1} \bm U_1^\top = \bm\Sigma_1.
\]
The same reasoning with the roles of \(\bm\eta_1\) and \(\bm\eta_2\) exchanged gives \(\bm\Sigma_3 \preceq \bm\Sigma_2\).
This completes the proof.
\end{document}